\title{Computing Marginals Using MapReduce}
\author{Foto Afrati\thanks{NTU Athens} \and
Shantanu Sharma\thanks{Ben Gurion University} \and
Jeffrey D.~Ullman\thanks{Stanford University} \and
Jonathan R.~Ullman\thanks{Northeastern University}}
\date{\today}
\newtheorem{theorem}{Theorem}[section]
\newtheorem{lemma}[theorem]{Lemma}
\newtheorem{corollary}[theorem]{Corollary}
\newtheorem{example}[theorem]{Example}
\theoremstyle{definition}
\begin{document}

\maketitle
\begin{abstract}
We consider the problem of computing the data-cube marg\-i\-nals of a fixed order $k$ (i.e., all marginals that aggregate over $k$ dimensions), using a single round of MapReduce. The focus is on the relationship between the reducer size (number of inputs allowed at a single reducer) and the replication rate (number of reducers to which an input is sent). We show that the replication rate is minimized when the reducers receive all the inputs necessary to compute one marginal of higher order. That observation lets us view the problem as one of covering sets of $k$ dimensions with sets of a larger size $m$, a problem that has been studied under the name ``covering numbers.'' We offer a number of constructions that, for different values of $k$ and $m$ meet or come close to yielding the minimum possible replication rate for a given reducer size.
\end{abstract}

\section{Background}
\label{background-sect}

\subsection{Marginals}
\label{marginals-sect}
Consider an $n$-dimensional data cube \cite{GBLP96} and the computation of its marginals by MapReduce. A {\em marginal} of a data cube is the aggregation of the data in all those tuples that have fixed values in a subset of the dimensions of the cube. We shall assume this aggregation is the sum, but the exact nature of the aggregation is unimportant in what follows. Marginals can be represented by a list whose elements correspond to each dimension, in order. If the value in a dimension is fixed, then the fixed value represents the dimension. If the dimension is aggregated, then there is a * for that dimension. The number of dimensions over which we aggregate is the {\em order} of the marginal.

\begin{example}
\label{marg-ex}
Suppose $n=5$, and the data cube is a relation DataCube(D1,D2,D3,D4,D5,V). Here, D1 through D5 are the dimensions, and $V$ is the value that is aggregated.
\begin{verbatim}
SELECT SUM(V)
FROM DataCube
WHERE D1 = 10 AND D3 = 20 AND D4 = 30;
\end{verbatim}
will sum the data values in all those tuples that have value 10 in the first dimension, 20 in the third dimension, 30 in the fourth dimension, and any values in the second and fifth dimension of a five-dimensional data cube. We can represent this marginal by the list $[10,*,20,30,*]$, and it is a second-order marginal.
\end{example}

\subsection{Assumption: All Dimensions Have Equal Extent}
\label{equal-d-subsect}
We shall make the simplifying assumption that in each dimension there are $d$ different values. In practice, we do not expect to find that each dimension really has the same number of values. For example, if one dimension represents Amazon customers, there would be millions of values in this dimension. If another dimension represents the date on which a purchase was made, there would ``only'' be thousands of different values.

However, it probably makes little sense to compute marginals where we fix the Customer dimension to be each customer, in turn; there would be too many marginals, and each would have only a small significance. More likely, we would want to group the values of dimensions in some way, e.g., customers by state and dates by month. Moreover, we shall see that our methods really only need the parameter $d$ to be an upper bound on the true number of distinct values in a dimension. The consequence of the {\em extents} (number of distinct values) of different dimensions being different is that some of the reducers will get fewer than the theoretical maximum number of inputs allowed. That discrepancy has only a small effect on the performance of the algorithm. Moreover, if there are really large differences among the extents of the dimensions, then an extension of our algorithms can improve the performance. We shall defer this issue to Section~\ref{different-sizes-sect}.

\subsection{Mapping Schemas for MapReduce Algorithms}
\label{mapping-schema-subsect}
We assume the reader is familiar with the MapReduce computational model \cite{dean:04}. Following the approach to analyzing MapReduce algorithms given in \cite{ADSU13}, we look at tradeoffs between the {\em reducer size} (maximum number of inputs allowed at a reducer), which we always denote by $q$, and the {\em replication rate} (average number of reducers to which an input needs to be sent), which we always denote by $r$. The replication rate represents the cost of communication between the mappers and reducers, and communication cost is often the dominant cost of a MapReduce computation. Typically, the larger the reducer size, the lower the replication rate. But we want to keep the reducer size low for two reasons: it enables computation to proceed in main memory, and it forces a large degree of parallelism, both of which lead to low wall-clock time to finish the MapReduce job.

In the theory of \cite{ADSU13}, a problem is modeled by a set of inputs (the tuples or points of the data cube, here), a set of outputs (the values of the marginals) and a relationship between the inputs and outputs that indicates which inputs are needed to compute which outputs. In order for an algorithm to solve this problem with a reducer size $q$, there must be a {\em mapping schema}, which is a relationship between inputs and reducers that satisfies two properties:

\begin{enumerate}
\item
No reducer is associated with more than $q$ inputs, and
\item
For every output, there is some reducer that is associated with all the inputs that output needs for its computation.
\end{enumerate}

Point (1) is the definition of ``reducer size,'' while point~(2) is the requirement that the algorithm can compute all the outputs that the problem requires. The fundamental reason that MapReduce algorithms are not just parallel algorithms in general is embodied by point~(2). In a MapReduce computation, every output is computed by one reducer, independently of all other reducers.

\subsection{Na\"ive Solution: Computing One Marginal Per Reducer}
\label{min-q-subsect}
Now, let us consider the problem of computing all the marginals of a data cube in the above model. If we are not careful, the problem becomes trivial. The marginal that aggregates over all dimensions is an output that requires all $d^n$ inputs of the data cube. Thus, $q=d^n$ is necessary to compute all the marginals. But that means we need a single reducer as large as the entire data cube, if we are to compute all marginals in one round. As a result, it only makes sense to consider the problem of computing a limited set of marginals in one round.

The $k^{\mathrm{th}}$-order marginals are those that fix $n-k$ dimensions and aggregate over the remaining $k$ dimensions. To compute a $k^{\mathrm{th}}$-order marginal, we need $q\ge d^k$, since such a marginal aggregates over $d^k$ tuples of the cube. Thus, we could compute all the $k^{\mathrm{th}}$-order marginals with $q=d^k$, using one reducer for each marginal. As a ``problem'' in the sense of \cite{ADSU13}, there are $d^n$ inputs, and $d^{n-k}\binom{n}{k}$ outputs, each representing one of the marginals. Each output is connected to the $d^k$ inputs over which it aggregates. Each input contributes to $\binom{n}{k}$ marginals~-- those marginals that fix $n-k$ out of the $n$ dimensions in a way that agrees with the tuple in question. That is, for $q=d^k$, we can compute all the $k^{\mathrm{th}}$-order marginals with a replication rate $r$ equal to $\binom{n}{k}$.

For $q=d^k$, there is nothing better we can do. However, when $q$ is larger, we have a number of options, and the purpose of this paper is to explore these options.

\section{Related Work}
\label{related-work-sect}
There have been a number of papers that look at the problem of using MapReduce to compute marginals. Probably the closest work to what we present here is in \cite{RP14}. This paper expresses the goal of minimizing communication, and of partitioning the work among reducers. It does not, however, present concrete bounds or algorithms that meet or approach those bounds, as we shall do here.

\cite{NYBR12} considers constructing a data cube using a nonassociative aggregation function and also examines how to deal with nonuniformity in the density of tuples in the cube. Like all the other papers mentioned, it deals with constructing the entire data cube using multiple rounds of MapReduce. We consider how to compute only the marginals of one order, using one round. We may assume that locally, at each reducer, higher-order marginals are computed by aggregating lower-order marginals for efficiency, but this method does not result in additional MapReduce rounds.

\cite{AFR11} looks at using MapReduce to form a data cube from data stored in Bigtable. \cite{Lee12} and \cite{WGRO14} are implementations of known algorithms in MapReduce. Finally, \cite{WCTAAX13} talks about extending MapReduce to compute data cubes more efficiently.

\section{Computing Many Marginals at One Reducer}
\label{higher-q-sect}
We wish to study the tradeoff between reducer size and replication rate, a phenomenon that appears in many problems \cite{Ull12, AU13, ADSU13, FDSU15}. Since we know from Section~\ref{min-q-subsect} the minimum possible reducer size, we need to consider whether using a larger value of $q$ can result in a significantly smaller value of $r$. Our goal is to combine marginals, in such a way that there is maximum overlap among the inputs needed to compute marginals at the same reducer.

We shall start by assuming the maximum overlap and minimum replication rate is obtained by combining $k^{\mathrm{th}}$-order marginals into a set of inputs suitable to compute one marg\-i\-nal of order higher than $k$. This assumption is correct, and we shall offer a proof of the fact in Section~\ref{isoper-sect}. However, we are still left with answering the question: how do we pack all the $k^{\mathrm{th}}$-order marginals into as few marginals of order $m$ as possible, for each $m>k$. That will lead us to the matter of ``asymmetric covering codes'' or ``covering numbers'' \cite{applegate:asym, cooper:asym}.

\subsection{Covering Marginals}
\label{covering-marg-subsect}
Suppose we want to compute all $k^{\mathrm{th}}$-order marginals, but we are willing to use reducers of size $q=d^m$ for some $m>k$. If we fix any $n-m$ of the $n$ dimensions of the data cube, we can send to one reducer the $d^m$ tuples of the cube that agree with those fixed values. We then can compute all the marginals that have $n-k$ fixed values, as long as those values agree with the $n-m$ fixed values that we chose originally.

\begin{example}
\label{covering-ex}
Let $n=7$, $k=2$, and $m=3$. Suppose we fix the first $n-m = 4$ dimensions, say using values $a_1$, $a_2$, $a_3$, and $a_4$. Then we can cover the $d$ marginals $a_1a_2a_3a_4x**$ for any of the values $x$ that may appear in the fifth dimension. We can also cover all marginals $a_1a_2a_3a_4{*}y*$ and $a_1a_2a_3a_4{*}{*}z$, where $y$ and $z$ are any of the possible values for the sixth and seventh dimensions, respectively. Thus, we can cover a total of $3d$ second-order marginals at this one reducer. That turns out to be the largest number of marginals we can cover with one reducer of size $q=d^3$.
\end{example}

What we do for one assignment of $n-m$ values to $n-m$ of the dimensions we can do for all assignments of values to the same dimensions, thus creating $d^{n-m}$ reducers, each of size $d^m$. Together, these reducers allow us to compute all $k^{\mathrm{th}}$-order marginals that fix the same $n-m$ dimensions (along with any $m-k$ of the remaining $m$ dimensions).

\begin{example}
\label{covering2-ex}
Continuing Example~\ref{covering-ex}, if we use $d^4$ reducers, each of which fixes the first four dimensions, then we can cover any of the $3d^5$ marginals that fix the first four dimensions along with one of dimensions 5, 6, or 7.
\end{example}

\subsection{From Marginals to Sets of Dimensions}
\label{marg-handle-subsect}
To understand why the problem is more complex than it might appear at first glance, let us continue thinking about the simple case of Example~\ref{covering-ex}. We need to cover all second-order marginals, not just those that fix the first four dimensions. If we had one team of $d^4$ reducers to cover each four of the seven dimensions, then we would surely cover all second-order marginals. But we don't need all $\binom{7}{2} = 21$ such teams. Rather, it is sufficient to pick a collection of sets of four of the seven dimensions, such that every set of five of the seven dimensions contains one of those sets of size four.

In what follows, we find it easier to think about the sets of dimensions that are aggregated, rather than those that are fixed. So we can express the situation above as follows. Collections of second-order marginals are represented by pairs of dimensions~-- the two dimensions such that each marginal in the collection aggregates over those two dimensions. These pairs of dimensions must be {\em covered} by sets of three dimensions~-- the three dimensions aggregated over by one third-order marginal. Our goal, which we shall realize in Example~\ref{7-5-1-ex} below, is to find a smallest set of tripletons such that every pair chosen from seven elements is contained in one of those tripletons.

In general, we are faced with the problem of covering all sets of $k$ out of $n$ elements by the smallest possible number of sets of size $m>k$. Such a solution leads to a way to compute all $k^{\mathrm{th}}$-order marginals using as few reducers of size $d^m$ as possible. Abusing the notation, we shall refer to the sets of $k$ dimensions as {\em marginals}, even though they really represent teams of reducers that compute large collections of marginals with the same fixed dimensions. We shall call the larger sets of size $m$ {\em handles}. The implied MapReduce algorithm takes each handle and creates from it a team of reducers that are associated, in all possible ways, with fixed values in all dimensions except for those dimensions in the handle. Each created reducer receives all inputs that match its associated values in the fixed dimensions.

\begin{example}
\label{7-5-1-ex}
Call the seven dimensions $ABCDEFG$. Then here is a set of seven handles (sets of size three), such that every marginal of size two is contained in one of them:
$$ABC,~ADE,~AFG,~BDF,~BEG,~CDG,~CEF$$
To see why these seven handles suffice, consider three cases, depending on how many of $A$, $B$, and $C$ are in the pair of dimensions to be covered.

\smallskip

\noindent Case 0: If none of $A$, $B$ or $C$ is in the marginal, then the marginal consists of two of $D$, $E$, $F$, and $G$. Note that all six such pairs are contained in one of the last six of the handles.

\smallskip

\noindent Case 1: If one of $A$, $B$, or $C$ is present, then the other member of the marginal is one of $D$, $E$, $F$, or $G$. If $A$ is present, then the second and third handles, $ADE$ and $AFG$ together pair $A$ with each of the latter four dimensions, so the marginal is covered. If $B$ is present, a similar argument involving the fourth and fifth of the handles suffices, and if $C$ is present, we argue from the last two handles.

\smallskip

\noindent Case 2: If the marginal has two of $A$, $B$, and $C$, then the first handle covers the marginal.
\end{example}

Incidentally, we cannot do better than Example~\ref{7-5-1-ex}. Since no handle of size three can cover more than three marginals of size two, and there are $\binom{7}{2}=21$ marginals, clearly seven handles are needed.

As a strategy for evaluating all second-order marginals of a seven-dimensional cube, let us see how the reducer size and replication rate compare with the baseline of using one reducer per marginal. Recall that if we use one reducer per marginal, we have $q=d^2$ and $r=\binom{7}{5}=21$. For the present method, we have $q=d^3$ and $r=7$. That is, each tuple is sent to the seven reducers that have the matching values in dimensions $DEFG$, $BCFG$, and so on, each set of attributes on which we match corresponding to the complement of one of the seven handles mentioned in Example~\ref{7-5-1-ex}.

\subsection{Covering Numbers}
\label{covering-numbers-subsect}
Let us define $C(n,m,k)$ to be the minimum number of sets of size $m$ out of $n$ elements such that every set of $k$ out of the same $n$ elements is contained in one of the sets of size $m$. For instance, Example~\ref{7-5-1-ex} showed that $C(7,3,2)=7$. $C(n,m,k)$ is called the {\em covering number} in \cite{applegate:asym}. The numbers $C(n,m,k)$ guide our design of algorithms to compute $k^{\mathrm{th}}$-order marginals.  There is an important relationship between covering numbers and replication rate, that justifies our focus on constructive upper bounds for $C(n,m,k)$.

\begin{theorem}
\label{rr-th}
If $q=d^{m}$, then we can solve the problem of computing all $k^{\mathrm{th}}$-order marginals of an $n$-dimensional data cube with $r = C(n,m,k)$.
\end{theorem}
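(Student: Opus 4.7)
The plan is to give a direct construction of a mapping schema whose replication rate matches $C(n,m,k)$, using an optimal covering as the blueprint.

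First, I would invoke the definition of $C(n,m,k)$ to fix a family $\mathcal{H}$ of exactly $C(n,m,k)$ subsets of $\{1,\dots,n\}$, each of size $m$ (the \emph{handles}), such that every $k$-subset of $\{1,\dots,n\}$ is contained in at least one $H\in\mathcal{H}$. Then for each handle $H\in\mathcal{H}$ and each assignment $\alpha$ of values to the $n-m$ dimensions outside $H$, I would create one reducer $R_{H,\alpha}$ and send to it precisely those tuples of the data cube that agree with $\alpha$ on the $n-m$ coordinates in $\{1,\dots,n\}\setminus H$. Since the remaining $m$ coordinates are unconstrained, $R_{H,\alpha}$ receives exactly $d^m=q$ tuples, so property (1) of a mapping schema holds.

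Next I would verify property (2): every $k^{\text{th}}$-order marginal is computable at some reducer. A $k^{\text{th}}$-order marginal is specified by a set $K$ of $k$ aggregated dimensions together with a value assignment $\beta$ to the other $n-k$ dimensions. By the covering property, there is a handle $H\in\mathcal{H}$ with $K\subseteq H$; hence $\{1,\dots,n\}\setminus H$ is contained in the $n-k$ dimensions on which $\beta$ fixes values. Restricting $\beta$ to those dimensions yields an assignment $\alpha$, and the reducer $R_{H,\alpha}$ then contains exactly the tuples matching $\beta$ on all fixed coordinates while ranging over every combination of values on the $k$ aggregated coordinates in $K$ (and on the $m-k$ ``extra'' coordinates in $H\setminus K$, which can simply be aggregated along the way). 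So $R_{H,\alpha}$ has the full set of inputs needed to compute the marginal.

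Finally I would count the replication rate. Fix any input tuple $t$. For each handle $H\in\mathcal{H}$, the tuple $t$ belongs to the reducer $R_{H,\alpha}$ where $\alpha$ is the restriction of $t$ to the coordinates outside $H$, and to no other reducer associated with $H$. Thus $t$ is replicated to exactly $|\mathcal{H}|=C(n,m,k)$ reducers, giving replication rate $r=C(n,m,k)$ as claimed. The only step that requires any genuine care is property (2): one must observe that $K\subseteq H$ forces $\{1,\dots,n\}\setminus H \subseteq \{1,\dots,n\}\setminus K$, so the values $\beta$ of the marginal do determine a legitimate assignment $\alpha$ on the complement of $H$; everything else is a straightforward size and incidence count.
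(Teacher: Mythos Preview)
Your proposal is correct and follows essentially the same construction as the paper: take an optimal covering by $C(n,m,k)$ handles, expand each handle into a team of $d^{n-m}$ reducers indexed by the fixed values outside the handle, and observe that every input lands in exactly one reducer per handle. The paper's proof is terser and leaves the verification of property~(2) implicit (relying on the preceding discussion of handles covering marginals), whereas you spell out explicitly why $K\subseteq H$ guarantees that the marginal's fixed values determine a valid $\alpha$; this is a welcome clarification but not a different idea.
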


\begin{proof}
Each marginal in the set of $C(n,m,k)$ handles can be turned into a team of reducers, one for each of the $d^{n-m}$ ways to fix the dimensions that are not in the handle. Each input gets sent to exactly one member of the team for each handle~-- the reducer that corresponds to fixed values that agree with the input. Thus, each input is sent to exactly $C(n,m,k)$ reducers.
\end{proof}

Sometimes we will want to fix some choices of $m, k$ and study how $C(n,m,k)$ grows with the dimension $n$.  In this case we will often write simply $C(n)$ when $m$ and $k$ are clear from context.

\subsection{First-Order Marginals}
\label{first-order-subsect}
The case $k=1$ is quite easy to analyze. We are asking how many sets of size $m$ are needed to cover each singleton set, where the elements are chosen from a set of size $n$. It is easy to see that we can group the $n$ elements into $\lceil n/m\rceil$ sets so that each of the $n$ elements is in at least one of the sets, and there is no way to cover all the singletons with fewer than this number of sets of size $m$. That is, $C(n,m,1) = \lceil n/m\rceil$. For example, If $n=7$ and $m=2$, then the seven dimensions $ABCDEFG$ can be covered by four sets of size 2, such as $AB$, $CD$, $EF$, and $FG$.

\subsection{2nd-Order Marginals Covered by 3rd-Or\-der Handles}
\label{2-3-subsect}
The next simplest case is $C(n,3,2)$, that is, covering second-order marginals by third-order marginals, or equivalently, covering sets of two out of $n$ elements by sets of size 3. One simple observation is that a set of size 3 can cover only three pairs, so $C(n,3,2)\ge \binom{n}{2}/3$, or:
\begin{equation}
\label{Cn32-lower-eq}
C(n,3,2)\ge n^2/6 - n/6
\end{equation}
In fact, more generally, $C(n,m,k)\ge \binom{n}{k}/\binom{m}{k}$.

\noindent{\bf Aside}: We also remark that, using the probabilistic method, one can show that
$$C(n,m,k) \leq 2 \ln \binom{n}{k} \cdot \frac{\binom{n}{k}}{\binom{m}{k}}$$
so this simple lower bound is actual optimal up to a factor of $2 \ln \binom{n}{k}$.  However, in what follows we will give constructions that are

\begin{itemize}
\item[(a)]
 \emph{Explicit} and, more importantly,
\item[(b)]
Meet the lower bound either \emph{exactly} or to within a constant factor.
\end{itemize}

While \cite{applegate:asym} gives us some specific optimal values of $C(n,3,2)$ to use as the basis of an induction, we would like a recursive algorithm for constructing ways to cover sets of size 2 by sets of size 3, and we would like this recursion to yield solutions that are as close to the lower bound of Equation~\ref{Cn32-lower-eq} as possible. We can in fact give a construction that, for an infinite number of $n$, matches the lower bound of Equation~\ref{Cn32-lower-eq}.
Suppose we have a solution for $n$ dimensions. We construct a solution for $3n$ dimensions as follows. First, group the $3n$ dimensions into three groups of $n$ each. Let these groups be $\{A_1,A_2,\ldots,A_n\}$, $\{B_1,B_2,\ldots,B_n\}$, and $\{C_1,C_2,\ldots,C_n\}$. We construct handles of two kinds:

\begin{enumerate}
\item
Choose all sets of three elements, one from each group, say $A_iB_jC_k$, such that $i+j+k$ is divisible by $n$. There are evidently $n^2$ such handles, since any choice from the first two groups can be completed by exactly one choice from the third group.
\item
Use the assumed solution for $n$ dimensions to cover all the pairs chosen from one of the three groups. So doing adds another $3C(n,3,2)$ handles.
\end{enumerate}

This set of handles covers all pairs chosen from the $3n$ dimensions. In proof, if the pair has dimensions from different groups, then it is covered by the handle from (1) that has those two dimensions plus the unique member of the third group such that the sum of the three indexes is divisible by $n$. If the pair comes from a single group, then we can argue recursively that it is covered by a handle added in (2).

\begin{example}
\label{3-2-ex}
Let $n=3$, and let the three groups be $A_1A_2A_3$, $B_1B_2B_3$, and $C_1C_2C_3$. From the first rule, we get the handles $A_1B_1C_1$, $A_1B_2C_3$, $A_1B_3C_2$, $A_2B_1C_3$, $A_2B_2C_2$, $A_2B_3C_1$, $A_3B_1C_2$, $A_3B_2C_1$, and $A_3B_3C_3$. Notice that the sum of the subscripts in each handle is 3, 6, or 9.
For the second rule, note that when $n=3$, a single handle consisting of all the dimensions suffices. Thus, we need to add $A_1A_2A_3$, $B_1B_2B_3$, and $C_1C_2C_3$. the total number of handles is 12. This set of handles is as small as possible, since $\binom{9}{2}/3 = 12$.
\end{example}

The recurrence that results from this construction is:
\begin{equation}
\label{Cn32-recur-eq}
C(3n,3,2)\le n^2 + 3C(n,3,2)
\end{equation}
Let us use $C(n)$ as shorthand for $C(n,3,2)$ in what follows. We claim that;

\begin{theorem}
\label{Cn32-th}
For $n$ a power of 3:
$C(n) = n^2/6 - n/6$.
\end{theorem}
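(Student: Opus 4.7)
The plan is a straightforward induction on $j$, where $n = 3^j$, using the recurrence established just before the theorem statement together with the already-noted lower bound $C(n,3,2) \ge \binom{n}{2}/3 = n^2/6 - n/6$.

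First I would dispose of the base case. For $n = 1$ there are no pairs to cover, so $C(1) = 0$, which matches the formula $1/6 - 1/6 = 0$. (One could equally well start at $n = 3$: the single handle $\{A_1,A_2,A_3\}$ covers the three pairs, so $C(3) = 1 = 9/6 - 3/6$.) The lower bound already gives one direction in all cases, so the entire task reduces to establishing the matching upper bound $C(3^{j+1}) \le (3^{j+1})^2/6 - 3^{j+1}/6$ inductively.

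For the inductive step, assume $C(n) = n^2/6 - n/6$ where $n = 3^j$. Applying the recurrence~(\ref{Cn32-recur-eq}) constructed above,
\begin{equation*}
C(3n) \le n^2 + 3 C(n) = n^2 + 3\left(\frac{n^2}{6} - \frac{n}{6}\right) = n^2 + \frac{n^2}{2} - \frac{n}{2} = \frac{3n^2}{2} - \frac{n}{2}.
\end{equation*}
A routine check shows this equals $(3n)^2/6 - (3n)/6$, so the upper bound matches the formula for $3n = 3^{j+1}$. Combined with the lower bound $C(3n) \ge (3n)^2/6 - (3n)/6$ from the cardinality argument (each handle of size $3$ covers at most $3$ pairs), we get equality, completing the induction.

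I do not anticipate any real obstacle here: the construction preceding the theorem already supplies the recurrence, the lower bound is already noted in equation~(\ref{Cn32-lower-eq}), and the algebraic verification that $n^2 + 3C(n) = C(3n)$ holds \emph{with equality} when $C(n)$ meets the lower bound is the only content of the induction. The one thing worth emphasizing in the write-up is why the recurrence is tight at each step for $n$ a power of $3$: the $n^2$ handles of type~(1) each cover exactly $3$ new cross-group pairs (one $AB$, one $AC$, one $BC$) and these are all distinct, while the $3C(n)$ handles of type~(2) cover the within-group pairs optimally by the inductive hypothesis, so no pair is double-counted and the bound is exact.
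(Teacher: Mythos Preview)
Your proposal is correct and follows essentially the same route as the paper: cite the lower bound from Equation~(\ref{Cn32-lower-eq}), then prove the matching upper bound by induction via the recurrence $C(3n)\le n^2+3C(n)$, with the same algebraic check that $n^2+3(n^2/6-n/6)=(3n)^2/6-(3n)/6$. The paper uses $n=3$ as the base case rather than $n=1$, and omits your closing paragraph on why the recurrence is tight, but these are cosmetic differences only.
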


\begin{proof}
We already argued that $C(n)\ge n^2/6 - n/6$, so we have only to show $C(n)\le n^2/6 - n/6$ for $n$ a power of 3.

For the basis, $C(3)=1$. Obviously one set of the three elements covers all three of its subsets of size two. Since $1 = 3^2/6 - 3/6$, the basis is proven.

For the induction, assume $C(n)\le n^2/6 - n/6$. Then by Equation~\ref{Cn32-recur-eq}, $C(3n)\le n^2 + 3n^2/6 - 3n/6 = 3n^2/2 - n/2 = (3n)^2/6 - (3n)/6$.
\end{proof}

We can get the same bound, or close to the same bound, for values of $n$ that are not a power of 3 if we start with another basis. All optimal values of $C(n)$ up to $n=13$ are given in \cite{applegate:asym}. For $n=4,5,\ldots,13$, the values of $C(n)$ are 3, 4, 6, 7, 11, 12, 17, 19, 24, and 26.

Using Theorem~\ref{rr-th}, we have the following corollary to Theorem~\ref{Cn32-th}.

\begin{corollary}
\label{Cn32-corr}
If $q=d^3$ and $n$ is a power of 3, then we can compute all second-order marginals with a replication rate of $n^2/6 - n/6$.
\end{corollary}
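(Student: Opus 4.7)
The plan is to derive the corollary as a direct specialization of the two preceding results. Theorem \ref{rr-th} establishes the general bridge between covering numbers and replication rate: whenever the reducer size is $q = d^m$, all $k$-th order marginals of an $n$-dimensional data cube can be computed with replication rate $r = C(n,m,k)$. Theorem \ref{Cn32-th} then pins down the exact value $C(n,3,2) = n^2/6 - n/6$ when $n$ is a power of 3. So the task reduces to instantiating Theorem \ref{rr-th} at $m = 3$, $k = 2$ and substituting.

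Concretely, I would first observe that the hypotheses of Theorem \ref{rr-th} are in force: the reducer size is $q = d^3$, and we wish to compute all second-order marginals, so we take $m = 3$ and $k = 2$. The theorem then guarantees a MapReduce algorithm achieving replication rate $r = C(n,3,2)$, by turning each of the $C(n,3,2)$ handles supplied by Theorem \ref{Cn32-th} into a team of $d^{n-3}$ reducers (one per assignment of values to the $n - 3$ fixed dimensions), with each input routed to exactly one reducer per handle. Next, since $n$ is assumed to be a power of 3, Theorem \ref{Cn32-th} supplies the equality $C(n,3,2) = n^2/6 - n/6$. Combining the two yields the claimed replication rate.

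There is essentially no obstacle here, since the substantive work lies in the two ingredients that have already been proved: Theorem \ref{rr-th} supplies the reduction from covering to MapReduce, and the recursive construction underlying Theorem \ref{Cn32-th} supplies the matching upper bound on $C(n,3,2)$. The corollary is then a bookkeeping step that records the consequence of the optimal covering construction at the level of MapReduce parameters.
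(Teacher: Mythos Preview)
Your proposal is correct and matches the paper's approach exactly: the paper simply states the result as an immediate corollary of Theorem~\ref{Cn32-th} via Theorem~\ref{rr-th}, and you have spelled out precisely that instantiation with $m=3$, $k=2$.
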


Note that the bound on replication rate given by Corollary~\ref{Cn32-corr}, which is equivalent to $\binom{n}{2}/3$, is exactly one third of the replication rate that would be necessary if we used a single reducer for each marginal (or, since $q=d^3$, and one second-order marginal requires $d^2$ inputs, the same improvement would hold when compared with packing $d$ randomly chosen second-order marginals at each reducer).

\subsection{A Slower Recursion for 2nd-Order Marg\-inals}
\label{2-3-slow-subsect}
There is an alternative recursion for constructing handles that offers solutions for $C(n,3,2)$. This recursion is not as good asymptotically as that of Section~\ref{2-3-subsect}; it uses approximately $n^2/4$ rather than $n^2/6$ handles. However, this recursion gives solutions for any $n$, not just those that are powers of 3.

Note that if we attempt to address values of $n$ that are not a power of $3$ by simply rounding $n$ up to the nearest power of $3$ and using the recursive construction from the previous section, then we may increase the replication rate by a factor as large as $9$, whereas the recursion in this section is never suboptimal by a factor larger than $3/2$.

Let us call the $n$ dimensions $A_1A_2B_1B_2\cdots B_{n-2}$.  We choose handles of two kinds:

\begin{enumerate}
\item
Handles that contain $A_1$, $A_2$, and one of
$$B_1,B_2,\ldots,B_{n-2}$$
There are clearly $n-2$ handles of this kind.
\item
The $C(n-2)$ handles that recursively cover all pairs chosen from $B_1,B_2,\ldots, B_{n-2}$.
\end{enumerate}

We claim that every marginal of size 2 is covered by one of these handles. If the marginal has neither $A_1$ nor $A_2$, then clearly it is covered by one of the handles from (2). If the marginal has both $A_1$ and $A_2$, then it is covered by any of the handles from (1). And if the marginal has one but not both of $A_1$ and $A_2$, then it has exactly one of the $B_i$'s. Therefore, it is covered by the handle from (1) that has $A_1$, $A_2$, and that $B_i$.

\begin{example}
\label{3-2-slow-ex}
Let $n=6$, and call the dimensions
$$ABCDEF$$
where $A$ and $B$ form the first group, and $CDEF$ form the second group. By rule~(1), we include handles $ABC$, $ABD$, $ABE$, and $ABF$. By rule~(2) we have to add a cover for each pair from $CDEF$. One choice is $CDE$, $CDF$, and $DEF$, for a total of seven handles. This choice is not exactly optimal, since six handles of size three suffice to cover all pairs chosen from six elements~\cite{applegate:asym}.
\end{example}

The resulting recurrence is
$$C(n)\le n-2+C(n-2)$$
We claim that for odd $n\ge3$, $C(n)\le n^2/4 - n/2 +1/4$. For the basis, we know that $C(3)=1$. As $3^2/4 - 3/2 + 1/4 = 1$, the basis $n=3$ is proved. The induction then follows from the fact that
$$n-2 +(n-2)^2/4 -(n-2)/2 + 1/4 = n^2/4 -n/2 + 1/4$$
For even $n$, we could start with $C(4)=3$. But we do slightly better if we start with the value $C(6)=6$, given in \cite{applegate:asym}. That gives us $C(n)\le n^2/4-n/2$ for all even $n\ge6$.

While this recurrence gives values of $C(n)$ that grow with $n^2/4$ rather than $n^2/6$, it does give us values that the recurrence of Section~\ref{2-3-subsect} cannot give us.

\begin{example}
\label{combine-ex}
The recurrence of Section~\ref{2-3-subsect} gives us
$$C(27) = 117$$
If we want a result for $n=31$, we can apply the recurrence of this section twice, to get $C(29) \le 27+117 = 143$ and $C(31) \le 29+143 = 172$. In comparison, the lower bound on the number of handles needed for $n=31$ is $\binom{31}{2}/3 = 155$.
\end{example}

\subsection{Aside: Solving Recurrences}
\label{solve-recur-subsect}
We are going to propose several recurrences that describe inductive constructions of sets of handles. While we do not want to explain how one discovers the solution to each recurrence, there is a general pattern that can be used by the reader who wants to see how the solutions are derived; see \cite{AU95}.

A recurrence like $C(n)\le n-2+C(n-2)$ from Section~\ref{2-3-slow-subsect} will have a solution that is a quadratic polynomial, say $C(n)=an^2+bn+c$. It turns out that the constant term $c$ is needed only to make the basis hold, but we can get the values of $a$ and $b$ by replacing the inequality by an equality, and then recognizing that the terms depending on $n$ must be 0. In this case, we get
$$an^2+bn+c = n-2 +a(n-2)^2+b(n-2)+c$$
or
$$an^2+bn+c = n-2 +an^2 -4an +4a +bn -2b +c$$
Cancelling terms and bringing the terms with $n$ to the left, we get
$$n(4a-1) = 4a - 2b -2$$
Since a function of $n$ cannot be a constant unless the coefficient of $n$ is 0, we know that $4a-1=0$, or $a=1/4$. The right side of the equality must also be 0, so we get $4(1/4)-2b-2=0$, or $b=-1/2$. We thus know that $C(n) = n^2/4-n/2+c$ for some constant $c$, depending on the basis value.

\subsection{Covering 2nd-Order Marginals With Larg\-er Handles}
\label{2-m-subsect}
We can view the construction of Section~\ref{2-3-slow-subsect} as dividing the dimensions into two groups; the first consisted of only $A_1$ and $A_2$, while the second group consisted of the remaining dimensions, which we called $B_1,B_2,\ldots,B_{n-2}$. We then divided the second-order marginals, which are pairs of dimensions, according to how the pair was divided between the groups. That is, either 0, 1, or 2 of the dimensions could be the the first group $\{A_1,A_2\}$. We treated each of these three cases, as we can summarize in the table of Fig.~\ref{cases-fig}.

\begin{figure}[htfb]

\begin{center}
\begin{tabular}{c|l l}
Case & $\{A_1,A_2\}$ & ~~~$B_i$'s\\
\hline
0 & none & ~~cover\\
1 & ~~~~~~~not & needed\\
2 & $A_1A_2$ & ~~all $B_i$'s\\
\end{tabular}\end{center}

\caption{How we cover each of the three cases: 0, 1, or 2 dimensions of the marginal are in the first group ($A_1A_2$)}
\label{cases-fig}
\end{figure}

That is, marginals with zero of $A_1$ and $A_2$ (Case 0) are covered recursively by the best possible set of handles that cover the $B_i$'s. Marginals with both $A_1$ and $A_2$ (Case 2) are covered by many handles, since we add to $A_1A_2$ all possible sets of size 1 formed from the $B_i$'s. The reason we do so is that we can then cover all the marginals belonging to Case~1, where exactly one of $A_1$ and $A_2$ is present, without adding any additional handles. That is, had we been parsimonious in Case~2, and only included one handle, such as $A_1A_2B_1$, then we would not have been able to skip Case~1.

Now, let us turn our attention to covering pairs of dimensions by sets of size larger than three; i.e., we wish to cover second-order marginals by handles of size $m$, for some $m\ge4$. We can generalize the technique of Section~\ref{2-3-slow-subsect} by using one group of size $m-1$, say $A_1,A_2,\ldots,A_{m-1}$ and another group with the remaining dimensions, $B_1,B_2,\ldots,B_{n-(m-1)}$. We can form handles for Case~0, where none of the $A_i$'s are in the marginal, recursively as we did in Section~\ref{2-3-slow-subsect}. That requires $C(n-(m-1),m,2)$ handles. If we deal with Case~$m-1$ by adding to $A_1A_2\cdots A_{m-1}$ each of the $B_i$'s in turn, to form $n-(m-1)$ additional handles, we cover all the other cases. Of course all the cases except for Case~1, where exactly one of the $A_i$'s is in the marginal, are vacuous. This reasoning gives us a recurrence:
$$C(n,m,2)\le n-(m-1) + C(n-(m-1),m,2)$$

Using the technique suggested by Section~\ref{solve-recur-subsect}. along with the obvious basis case $C(m,m,2) = 1$, we get the solution:
$$C(n,m,2)\le \frac{n^2}{2(m-1)} - \frac{n}{2} +1 - \frac{m}{2(m-1)}$$
Note that asymptotically, this solution uses $\frac{n^2}{2(m-1)}$ handles, while the lower bound is $\frac{n(n-1)}{m(m-1)}$ handles. Therefore, this method is worse than the theoretical minimum by a factor of roughly $m/2$.

\begin{example}
\label{m-2-ex}
Let $n=9$ and $m=4$. Call our dimensions $ABCDEFGHI$, where $ABC$ is the first group and $DEFGHI$ the second. For Case~$m-1$ we use the handles $ABCD$, $ABCE$, $ABCF$, $ABCG$, $ABCH$, and $ABCI$. For Case~0, we cover pairs from $DEFGHI$ optimally, using sets of size four; one such choice is $DEFG$, $DEHI$, and $FGHI$, for a total of nine handles.
\end{example}

\subsection{A Recursive-Doubling Method for Covering 2nd-Order Marginals}
\label{rec-doubling-m-2-subsect}
For a sparse but infinite set of values of $n$, there is a better recursion for $C(n,m,2)$. Use two groups, each with half the dimensions, say $n$ dimensions. You can cover all pairs with one dimension in each group, as follows. Assuming $m$ divides $n$, start with sets consisting of $m/2$ members of one of the groups. We need $2n/m$ such sets for each group. Then, pair the sets for each group in all possible ways, forming $4n^2/m^2$ handles of size $m$. These handles cover all pairs that have one member in each group. To these add the recursively constructed sets of handles for the two groups of size $n$. The implied recurrence for this method is:
$$C(2n,m,2)\le 4n^2/m^2 + 2C(n,m,2)$$
If we use $C(m,m,2)=1$ as the basis, the upper bound on $C(n,m,2)$ implied by this recurrence is
$$C(n,m,2)\le 2n^2/m^2 - 1$$
This bound applies only for those values of $n$ that are $m$ times a power of 2. It does, however, give us an upper bound that is only a factor of 2 (roughly) greater than the lower bound of $\binom{n}{2}/\binom{m}{2}$. Additionally, if we attempt to address values of $n$ that are not $m$ times a power of $2$, by rounding up to the nearest such value, we increase $n$ by a factor that approaches $2$ for large values of $n$.  Doing so increases the replication rate by a factor of at most $4$, so the construction in this section improves on that of the previous section for sufficiently large $m$.

\begin{example}
\label{recursive-doubling-ex}
Let $n=m=4$, and suppose the dimensions are $ABCD$ in the first group and $EFGH$ in the second group. We cover all pairs of these eight dimensions with sets of size four, as follows. We first cover the singletons from $ABCD$ using two sets of size 2, say $AB$ and $CD$. Similarly, we cover all singletons from $EFGH$ using $EF$ and $GH$. Then we pair $AB$ and $CD$ in all possible ways with $EF$ and $GH$, to get $ABEF$, $ABGH$, $CDEF$, and $CDGH$. Finally, add covers for each of the groups. A single handle of size four, $ABCD$, covers all pairs from the first group, and the handle $EFGH$ covers all pairs from the second group, for a total of six handles.
\end{example}

\subsection{The General Case}
\label{general-case-subsect}
Finally, we offer a recurrence for $C(n,m,k)$ that works for all $n$ and for all $m>k$. it does not approach the lower bound, but it is significantly better than using one handle per marginal. This method generalizes that of Section~\ref{2-3-slow-subsect}. We use two groups. The first has $m-k+1$ of the dimensions, say $A_1,A_2,\ldots, A_{m-k+1}$, while the second has the remaining $n-m+k-1$ dimensions. The handles are of two types:

\begin{enumerate}
\item
One group of handles contains $A_1A_2\cdots A_{m-k+1}$, i.e., all of group~1, plus any $k-1$ dimensions from group~2. There are $\binom{n-m+k-1}{k-1}$ of these handles, and each has exactly $m$ members.
\item
The other handles are formed recursively to cover the dimensions of group~2, and have none of the members of group~1. There are $C(n-m+k-1,m,k)$ of these handles.
\end{enumerate}

We claim that every marginal of size $k$ is covered by one of these handles. If the marginal has at least one dimension from group~1, then it has at most $k-1$ from group~2. Therefore in is covered by the handles from (1). And if the marginal has no dimensions from group~1, then it is surely covered by a handle from (2).
As a shorthand, let $C(n)$ stand for $C(n,m,k)$. The recurrence for $C(n)$ implied by this construction is
\begin{equation}
\label{general-eq}
C(n)\le \binom{n-m+k-1}{k-1} + C(n-m+k-1)
\end{equation}
We shall prove that:

\begin{theorem}
\label{general-th}
$C(n)\le \binom{n}{k}/(m-k+1)$ for $n$ equal to 1 plus an integer multiple of $m-k+1$.
\end{theorem}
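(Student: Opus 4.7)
My plan is induction on $j$, where the hypothesis $n = 1 + j(m-k+1)$ parameterizes the allowed values of $n$. Abbreviating $s = m - k + 1$, observe that $n - s = 1 + (j-1)s$ lies in the same arithmetic progression, so the recurrence $C(n) \le \binom{n-s}{k-1} + C(n-s)$ from Equation~(\ref{general-eq}) feeds induction on $j$ cleanly: each application of the recurrence moves us from one allowed value of $n$ to the next allowed value.

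The base case $j = 0$, i.e.\ $n = 1$, is immediate: for $k \ge 2$ a one-element ground set has no $k$-subsets, so $C(1) = 0$, matching the claimed bound $\binom{1}{k}/s = 0$. For the inductive step I would assume $C(n-s) \le \binom{n-s}{k}/s$ and substitute into the recurrence, reducing the desired conclusion $C(n) \le \binom{n}{k}/s$ to the purely combinatorial claim
\[
s\binom{n-s}{k-1} + \binom{n-s}{k} \le \binom{n}{k}.
\]
Once this is in hand, dividing through by $s$ finishes the step.

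The main (and only) obstacle is this binomial inequality, but I expect it to be pleasant rather than painful. My plan is to rewrite $\binom{n}{k} - \binom{n-s}{k}$ by telescoping with Pascal's rule $\binom{i}{k} - \binom{i-1}{k} = \binom{i-1}{k-1}$, which expresses the difference as $\sum_{i=0}^{s-1} \binom{n-1-i}{k-1}$. This sum has exactly $s$ terms, and each one is at least $\binom{n-s}{k-1}$ by monotonicity of $\binom{\cdot}{k-1}$ in its upper argument. Hence the sum dominates $s\binom{n-s}{k-1}$, which is precisely what we need, and the induction closes without further estimates.
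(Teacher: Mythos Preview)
Your approach is essentially the paper's: both run induction through the recurrence of Equation~\ref{general-eq} and reduce the inductive step to the same binomial inequality $s\binom{n-s}{k-1} + \binom{n-s}{k} \le \binom{n}{k}$ (with $s=m-k+1$). Two differences are worth noting.

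First, the paper proves the binomial inequality by a one-line counting argument: the right-hand side counts exactly those $k$-subsets of $\{1,\ldots,n\}$ that use \emph{at most one} element from the first $s$ elements, which is manifestly at most $\binom{n}{k}$. Your telescoping via Pascal's rule plus monotonicity of $\binom{\cdot}{k-1}$ is an equally valid route to the same inequality; neither argument is materially harder than the other.

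Second, the paper takes the basis to be $n=m$ (where $C(m)=1\le\binom{m}{k}/s$), not $n=1$. Your choice $n=1$ does lie in the stated progression, but be careful: Equation~\ref{general-eq} is justified by a construction that must form handles of size $m$, so it is only available once $n\ge m$. For $k\ge 3$ your first inductive step lands at $n=1+s=m-k+2<m$, where the recurrence cannot be invoked (indeed $C(n,m,k)$ is not even finite when $k\le n<m$). Starting the induction at $n=m$, as the paper does, avoids this --- though at the price that $m$ itself need not lie in the progression ``$1+j(m-k+1)$'' when $k\ge 3$, which is a wrinkle in the theorem's statement rather than in either argument.
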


\begin{proof}
The proof is an induction on $n$.

\noindent{\small BASIS}: We know $C(m)=1$, and $\binom{m}{k}/(m-k+1) \ge 1$ for any $1\le k<m$.

\noindent{\small INDUCTION}: We know from Equation~\ref{general-eq} that
$$\binom{n-m+k-1}{k-1} + \frac{\binom{n-m+k-1}{k}}{(m-k+1)}$$
is an upper bound on $C(n)$. We therefore need to show that
$$\frac{\binom{n}{k}}{m-k+1} \ge \binom{n-m+k-1}{k-1} + \frac{\binom{n-m+k-1}{k}}{m-k+1}$$
Equivalently,
\begin{equation}
\label{general2-eq}
\binom{n}{k} \ge (m-k+1)\binom{n-m+k-1}{k-1}+ \binom{n-m+k-1}{k}
\end{equation}
The left side of Equation~\ref{general2-eq} is all ways to pick $k$ things out of $n$. The right side counts a subset of these ways, specifically those ways that pick either:

\begin{enumerate}
\item
Exactly one of the first $m-k+1$ elements and $k-1$ of the remaining elements, or
\item
None of the first $m-k+1$ elements and $k$ from the remaining elements.
\end{enumerate}
Thus, Equation~\ref{general2-eq} holds, and $C(n,m,k)\le \binom{n}{k}/(m-k+1)$ is proved.
\end{proof}

Theorem~\ref{general-th} applies only for certain $n$ that form a linear progression. However, we can prove similar bounds for $n$ that are not of the form 1 plus an integer multiple of $m-k+1$ by using a different basis case. The only effect the basis has is (possibly) to add a constant to the bound.

The bound of Theorem~\ref{general-th} plus Theorem~\ref{rr-th} gives us an upper bound on the replication rate:

\begin{corollary}
\label{general-upper-corr}
We can compute all $k^{\mathrm{th}}$-order marg\-i\-nals using reducers of size $q=d^m$, for $m>k$, with a replication rate of $r\le \binom{n}{k}/(m-k+1)$.
\end{corollary}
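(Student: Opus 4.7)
The plan is to observe that this corollary is a direct combination of the two theorems we already have in hand, and so the proof amounts to threading them together with attention to the restriction on $n$ that Theorem~\ref{general-th} carries.

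First, I would invoke Theorem~\ref{general-th} to conclude that $C(n,m,k)\le \binom{n}{k}/(m-k+1)$, at least for $n$ equal to $1$ plus an integer multiple of $m-k+1$. Next, I would instantiate Theorem~\ref{rr-th} with this same value of $m$: that theorem guarantees that, whenever the reducer size is $q=d^m$, there is a MapReduce algorithm that computes all $k^{\text{th}}$-order marginals with replication rate exactly equal to $C(n,m,k)$. Chaining these two bounds gives $r\le \binom{n}{k}/(m-k+1)$, which is the claimed inequality.

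The only subtle point is that Theorem~\ref{general-th} is stated for a linear progression of dimensions $n$, while the corollary makes no such restriction. To handle arbitrary $n$, I would invoke the remark immediately following Theorem~\ref{general-th}: using a different basis case (for example, applying the recurrence~\ref{general-eq} starting from an optimal covering on the residue class of $n$ modulo $m-k+1$) yields the same asymptotic bound $\binom{n}{k}/(m-k+1)$ up to an additive constant that depends only on $m$ and $k$. Since the corollary's inequality is non-strict and the additive slack can be absorbed into the leading term for $n$ large enough (or explicitly bounded for small $n$), the conclusion holds as stated.

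Since both of the ingredient results are already proved, there is no real obstacle; the only thing worth being careful about is matching the covering handle to a team of $d^{n-m}$ reducers cleanly, but Theorem~\ref{rr-th} has already encapsulated that step, so the proof reduces to a single line of substitution.
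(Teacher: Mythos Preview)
Your proposal is correct and mirrors the paper's own justification: the corollary is simply stated as following from Theorem~\ref{general-th} combined with Theorem~\ref{rr-th}, with the same appeal to the remark that other values of $n$ are handled by changing the basis at the cost of an additive constant. Your discussion of the restriction on $n$ is, if anything, slightly more careful than the paper's one-line derivation.
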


\subsection{Handles of Size 4 Covering Marginals of Size 3}
\label{3-4-subsect}
We can improve on Theorem~\ref{general-th} slightly for the special case of $m=4$ and $k=3$. The latter theorem gives us $C(n,4,3)\le \binom{n}{3}/2$, or approximately $C(n,4,3)\le n^3/12$, but we can get $C(n,4,3)\le n^3/16$ by the following method, at least for a sparse but infinite set of values of $n$. Note that in comparison, the lower bound for $C(n,4,3)$ is approximately $n^3/24$.

To get the better upper bound, we generalize the strategy of Section~\ref{2-3-subsect}. Let the dimensions be placed into four groups, with $n$ dimensions in each group. Assume the members of each group are assigned ``indexes'' 1 through $n$.

\begin{enumerate}
\item
Form $n^3$ handles consisting of those sets of dimensions, one from each group, the sum of whose indexes is a multiple of $n$.
\item
For each of the six pairs of groups, recursively cover the members of those two groups together by a set of $C(2n,4,3)$ handles.
\end{enumerate}

Observe that every triple of dimensions is either from three different groups, in which case it is covered by one of the handles from (1), or it involves members of at most two groups, in which case it is covered by a handle from (2). We conclude that:
$$C(4n,4,3)\le n^3 + 6C(2n,4,3)$$
This recurrence is satisfied by $C(n,4,3) = n^3/16$. If we start with, say, $C(4,4,3)=1$, we can show $n^3/16$ is an upper bound on $C(n,4,3)$ for all $n\ge4$ that is a power of two.

\noindent{\bf Aside}: It appears that this algorithm and that of Section~\ref{2-3-subsect} are {\em not} instances of a more general algorithm. That is, there is no useful extension to $C(n,k+1,k)$ for $k>3$.

\section{Optimal Handles are Subcubes}
\label{isoper-sect}
We shall now demonstrate that for a given reducer size $q$, the largest number of marginals of a given order $k$ that we can cover with a single reducer occurs when the reducer gets all tuples needed for a marginal of some higher order $m$. The proof extends the ideas found in \cite{Bollobas86, HLW06} regarding isoperimetric inequalities for the hypercube. In general, an ``isoperimetric inequality'' is a lower bound on the size of the perimeter of a shape, e.g., the fact that the circle has the smallest perimeter of any shape of a given area. For particular families of graphs, these inequalities are used to show that any set of nodes of a certain size must have a minimum number of edges that connect the set to a node not in the set.

We need to use these inequalities in the opposite way~-- to give upper bounds on the number of edges {\em covered}; i.e., both ends of the edge are in the set. For example, in \cite{ADSU13} the idea was used to show that a set of $q$ nodes of the $n$-dimensional Boolean hypercube could not cover more than $\frac{q}{2}\log_2q$ edges. That upper bound, in turn, was needed to give a lower bound on the replication rate (as a function of $q$, the reducer size) for MapReduce algorithms that solve the problem of finding all pairs of inputs at Hamming distance 1.

Here, we have a similar goal of placing a lower bound on replication rate for the problem of computing the $k^{\mathrm{th}}$-order marginals of a data cube of $n$ dimensions, each dimension having extent $d$, using reducers of size $q$. The necessary subgoal is to put an upper bound on the number of subcubes of $k$ dimensions that can be wholly contained within a set of $q$ points of this hypercube. We shall call this function $f_{k,n}(q)$. Technically, $d$ should be a parameter, but we shall assume a fixed $d$ in what follows. We also note that the function does not actually depend on the dimension $n$ of the data cube.

\subsection{Binomial Coefficients with Noninteger Arg\-u\-ments}
\label{gamma-subsect}
Our bound on the function $f_{k,n}(q)$ requires us to use a function that behaves like the binomial coefficients $\binom{x}{y}$, but is defined for all nonnegative $x$ and $y$, not just for integer values (in particular, $x$ may be noninteger, while $y$ will be an integer in what follows). The needed generalization uses the gamma function \cite{wikipedia-gamma} $\Gamma(t) = \int_0^{\inf} x^{t-1}e^{-x}dx$. When $t$ is an integer, $\Gamma(t) = (t-1)!$. But $\Gamma(t)$ is defined for nonintegral $t$ as well. Integration by parts lets us show that $\Gamma$ always behaves like the factorial of one less than its argument:
\begin{equation}
\label{gamma-eq}
\Gamma(t+1) = t\Gamma(t)
\end{equation}
If we generalize the expression for $\binom{u}{v}$ in terms of factorials from $\frac{u!}{v!(u-v)!}$ to
\begin{equation}
\label{gamma2-eq}
\binom{u}{v} = \frac{\Gamma(u+1)}{\Gamma(v+1)\Gamma(u-v+1)}
\end{equation}
then we maintain the property of binomial coefficients that we need in what follows:

\begin{lemma}
\label{gamma-lemma}
If $\binom{x}{y}$ is defined by the expression of Equation~\ref{gamma2-eq}, then
$$\binom{x}{y} = \binom{x-1}{y} + \binom{x-1}{y-1}$$
\end{lemma}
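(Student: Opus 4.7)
The plan is to reduce the claimed recurrence to an identity for the Gamma function, using the functional equation in Equation~\ref{gamma-eq}. First I would expand the two summands on the right via the defining Equation~\ref{gamma2-eq}, writing
$$\binom{x-1}{y} = \frac{\Gamma(x)}{\Gamma(y+1)\Gamma(x-y)}, \qquad \binom{x-1}{y-1} = \frac{\Gamma(x)}{\Gamma(y)\Gamma(x-y+1)}.$$

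Next I would bring both terms over the common denominator $\Gamma(y+1)\Gamma(x-y+1)$, which is exactly the denominator appearing in $\binom{x}{y}$. Achieving this requires two applications of Equation~\ref{gamma-eq}: in the first summand I replace $\Gamma(x-y)$ by $\Gamma(x-y+1)/(x-y)$, introducing a factor $(x-y)$ in the numerator, and in the second summand I replace $\Gamma(y)$ by $\Gamma(y+1)/y$, introducing a factor $y$ in the numerator. After these substitutions the two numerators are $(x-y)\Gamma(x)$ and $y\Gamma(x)$.

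Adding them, the coefficients combine to $(x-y) + y = x$, leaving $x\Gamma(x)$ over $\Gamma(y+1)\Gamma(x-y+1)$. One final application of Equation~\ref{gamma-eq}, in the form $\Gamma(x+1) = x\Gamma(x)$, rewrites the numerator as $\Gamma(x+1)$, and the resulting expression is precisely $\binom{x}{y}$ in the sense of Equation~\ref{gamma2-eq}.

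There is no genuine obstacle here; the entire argument is a mechanical manipulation of the functional equation for $\Gamma$, and integrality of $x$ or $y$ is never invoked, so the identity holds in the full generality that Section~\ref{isoper-sect} will later need. The only point to watch is lining up the shifts in the two applications of Equation~\ref{gamma-eq} before clearing denominators, so that the numerators collapse cleanly into $x\Gamma(x)$.
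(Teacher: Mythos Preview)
Your proposal is correct and follows essentially the same approach as the paper: expand both sides via Equation~\ref{gamma2-eq} and reduce the identity to the functional equation $\Gamma(t+1)=t\Gamma(t)$ from Equation~\ref{gamma-eq}. You have simply written out in full the common-denominator step that the paper leaves to the reader.
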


\begin{proof}
If we use Equation~\ref{gamma2-eq} to replace the binomial coefficients, we get
$$\frac{\Gamma(x+1)}{\Gamma(y+1)\Gamma(x-y+1)} =
\frac{\Gamma(x)}{\Gamma(y+1)\Gamma(x-y)} +
\frac{\Gamma(x)}{\Gamma(y)\Gamma(x-y+1)}$$
The above equality can be proved if we use Equation~\ref{gamma-eq} to replace $\Gamma(x+1)$ by $x\Gamma(x)$, \hbox{$\Gamma(x-y+1)$} by \hbox{$(x-y)\Gamma(x-y)$}, and $\Gamma(y+1)$ by $y\Gamma(y)$.
\end{proof}
In what follows, we shall use $\binom{u}{v}$ with the understanding that it actually stands for the expression given by Equation~\ref{gamma2-eq}.

\subsection{The Upper Bound on Covered Subcubes}
\label{isoper-proof-subsect}
We are now ready to prove the upper bound on the number of subcubes of dimension $k$ that can be covered by a set of $q$ nodes.

\begin{theorem}
\label{isoper-th}
$$f_{k,n}(q)\le \frac{q}{d^k}\binom{\log_dq}{k}$$
\end{theorem}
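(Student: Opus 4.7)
The plan is to prove $f_{k,n}(q) \le g_k(q) := \tfrac{q}{d^k}\binom{\log_d q}{k}$ by induction on $n$, handling all $k \le n$ and all $q$ simultaneously, and slicing the cube along one coordinate at each step. The base cases $k=0$ (where both sides equal $q$) and $n=k$ (where the only $k$-subcube is the whole cube, and both sides equal $1$ when $q=d^k$) are direct.

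For the inductive step, pick an arbitrary coordinate $i$ and partition $S$ into $d$ slices $S_0,\ldots,S_{d-1}$ according to the value of $X_i$, with $|S_j| = q_j$ and $\sum_j q_j = q$. Projecting out coordinate $i$ identifies each $S_j$ with a set $T_j$ of size $q_j$ in the $(n-1)$-dimensional cube; set $q^* = |\bigcap_j T_j| \le \min_j q_j$. Any $k$-subcube covered by $S$ is either \emph{horizontal} (coordinate $i$ is fixed, so the subcube lives in one $S_j$ and projects to a $k$-subcube contained in $T_j$) or \emph{vertical} (coordinate $i$ is free, so its projection is a $(k-1)$-subcube contained in every $T_j$, hence in $\bigcap_j T_j$). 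This yields
$$ f_{k,n}(q) \;\le\; \sum_{j=0}^{d-1} f_{k,n-1}(q_j) + f_{k-1,n-1}(q^*) \;\le\; \sum_{j=0}^{d-1} g_k(q_j) + g_{k-1}(q^*), $$
using the inductive hypothesis at dimension $n-1$ for both $k$ and $k-1$.

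Everything then reduces to the purely numerical inequality
$$ \sum_{j=0}^{d-1} g_k(q_j) + g_{k-1}(q^*) \;\le\; g_k(q), \qquad \textstyle\sum_j q_j = q,\; q^* \le \min_j q_j. \qquad (\star) $$
The Pascal identity of Lemma~\ref{gamma-lemma} (which is the whole reason Section~\ref{gamma-subsect} introduces non-integer binomial coefficients) gives the algebraic identity $g_k(dx) = d\,g_k(x) + g_{k-1}(x)$. This makes $(\star)$ an equality in the fully balanced case $q_0 = \cdots = q_{d-1} = q^* = q/d$ and, trivially, in the fully concentrated case $q_0 = q$, other $q_j = 0$, $q^* = 0$. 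For the general case I would set $q^* = \min_j q_j =: q_0$ (using monotonicity of $g_{k-1}$), rewrite the left-hand side as $g_k(d q_0) + \sum_{j \ge 1}\bigl( g_k(q_j) - g_k(q_0) \bigr)$ via the identity, and then compare telescopically against $g_k(q) = g_k\bigl(d q_0 + \sum_{j \ge 1}(q_j - q_0)\bigr)$, appealing to convexity of $g_k$ so that each increment $g_k(x+\Delta) - g_k(x)$ is non-decreasing in $x$ and may be shifted rightward without decreasing.

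The main obstacle is this arithmetic step $(\star)$, and specifically verifying the required convexity/monotonicity of the increment function from the Gamma-function representation of $g_k$. Everything else is bookkeeping: the slicing decomposition and the translation into $g_k$-values are routine, but the interpolation argument that closes the induction genuinely depends on the non-integer Pascal identity established in Section~\ref{gamma-subsect}.
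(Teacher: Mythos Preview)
Your approach is essentially identical to the paper's: both slice along one coordinate, split the covered $k$-subcubes into horizontal (coordinate fixed, handled by induction on $n$ in each slice) and vertical (coordinate free, handled by induction on $k$), and then reduce to the arithmetic inequality that becomes an equality at the balanced point $q_j=q/d$ via the Pascal identity of Lemma~\ref{gamma-lemma}. The paper treats your step~$(\star)$ only informally---claiming, with a sketched derivative argument, that the left side is maximized when all $x_i$ are equal---so your identification of the convexity/monotonicity verification as the crux is exactly on target, and your bound on vertical subcubes via $\bigcap_j T_j$ is in fact slightly sharper than the paper's minimum-slice bound.
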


\begin{proof}
The proof is a double induction, with an outer induction on $k$ and the inner induction on $n$.
\noindent{\small BASIS}: The basis is $k=0$. The ``0th-order'' marginals are single points of the data cube, and the theorem asserts that $f_{0,n}(q)\le q$. Since $q$ is the largest number of points at a reducer, the basis is holds, independent of $n$.
\noindent{\small INDUCTION}: We assume the theorem holds for smaller values of $k$ and all $n$, and also that it holds for the same value of $k$ and smaller values of $n$. Partition the cube into $d$ subcubes of dimension $n-1$, based on the value in the first dimension. Call these subcubes the {\em slices}. The inductive hypothesis applies to each slice. Suppose that the $i$th slice has $x_i$ of the $q$ points. Note $\sum_{i=1}^d x_i = q$. There are two ways a $k$-dimensional subcube can be covered by the original $q$ points:

\begin{enumerate}
\item
The subcube of dimension $k$ has a fixed value in dimension 1, and it is contained in one of the $d$ slices.
\item
Dimension 1 is one of the $k$ dimensions of the subcube, so the subcube has a $(k-1)$-dimensional projection in each of the slices.
\end{enumerate}

Case~(1) is easy. By the inductive hypothesis, there can be no more than
$$\sum_{i=1}^d \frac{x_i}{d^k}\binom{\log_dx_i}{k}$$
subcubes of this type covered by the $q$ nodes.
For Case~(2), observe that the number of $k$-dimensional subcubes covered can be no larger than the number of subcubes of dimension $k-1$ that are covered by the smallest of the $d$ slices. The inductive hypothesis also applies to give us an upper bound on these numbers. Therefore, we have an upper bound on $f_{k,n}(q)$:
\begin{equation}
\label{fkn-eq}
f_{k,n}(q)\le \sum_{i=1}^d \frac{x_i}{d^k}\binom{\log_dx_i}{k} + \min_i \frac{x_i}{d^{k-1}}\binom{\log_dx_i}{k-1}
\end{equation}
We claim that Equation~\ref{fkn-eq} attains its maximum value when all the $x_i$'s are equal. We can formally prove this claim by studying the derivatives of this function, however for brevity we will only give an informal proof of this claim.

Suppose that were not true, and the largest value of the right side, subject to the constraint that $\sum_{i=1}^dx_i=q$, occurred with unequal $x_i$'s. We could add $\epsilon$ to each of those $x_i$'s that had the smallest value, and subtract small amounts from the larger $x_i$'s to maintain the constraint that the sum of the $x_i$'s is $q$. The result of this change is to increase the minimum in the second term on the right of Equation~\ref{fkn-eq} at least linearly in $\epsilon$. However, since any power of $\log x_i$ grows more slowly than linearly in $x_i$, there is a negligible effect on the first term on the right of Equation~\ref{fkn-eq}, since the sum of the $x_i$'s does not change, and redistributing small amounts among logarithms will have an effect less than the amount that is redistributed.

Now, let us substitute $x_i = q/d$ for all $x_i$ in Equation~\ref{fkn-eq}. That change gives us a true upper bound on $f_{k,n}(q)$ which is:
$$f_{k,n}(q)\le \frac{q}{d^k}\left[\binom{\log_dq - 1}{k} + \binom{\log_dq - 1}{k-1}\right]$$
But Lemma~\ref{gamma-lemma} tells us $\binom{x}{y} = \binom{x-1}{y} + \binom{x-1}{y-1}$, so we can conclude the theorem when we let $x=\log_dq$ and $y=k$.
\end{proof}

We can now apply Theorem~\ref{isoper-th} to show that when $q$ is the size we need to hold all tuples of the data cube that belong to an $m$th-order marginal for some $m>k$, then the number of $k^{\mathrm{th}}$-order marginals covered by this reducer is maximized if we send it all the tuples belonging to a marginal of order $m$.

\begin{corollary}
\label{isoper-corr}
If $q=d^m$ for some $m>k$, then no selection of $q$ tuples for a reducer can cover more $k^{\mathrm{th}}$-order marginals than choosing all the tuples belonging to an $m$th-order marginal.
\end{corollary}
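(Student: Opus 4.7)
The plan is to derive the corollary directly from Theorem~\ref{isoper-th} by simply specializing $q$ to $d^m$ and observing that the resulting upper bound is matched exactly by the ``take an $m$-dimensional subcube'' strategy. So there is nothing genuinely new to prove; the real work was packed into the isoperimetric Theorem~\ref{isoper-th}, and this corollary is a tightness statement.

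First I would substitute $q = d^m$ into the bound of Theorem~\ref{isoper-th}. Since $\log_d q = m$, the bound becomes
$$f_{k,n}(d^m) \le \frac{d^m}{d^k}\binom{m}{k} = d^{m-k}\binom{m}{k}.$$
Crucially, because $m$ is an integer, the generalized binomial coefficient $\binom{m}{k}$ defined via the gamma function coincides with the usual integer binomial coefficient, so the right-hand side is an honest count.

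Next I would count the $k^{\mathrm{th}}$-order marginals that are contained in a single $m^{\mathrm{th}}$-order marginal. An $m^{\mathrm{th}}$-order marginal is an $m$-dimensional subcube $S$: it fixes $n-m$ coordinates and lets the other $m$ range freely. A $k^{\mathrm{th}}$-order subcube contained in $S$ must agree with $S$ on the $n-m$ coordinates $S$ fixes, must choose $k$ of the remaining $m$ coordinates to leave free, and must fix the other $m-k$ coordinates to some one of $d$ values each. So the number of $k^{\mathrm{th}}$-order marginals inside $S$ is exactly $\binom{m}{k}\,d^{m-k}$, matching the upper bound above.

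Thus the strategy of handing a reducer all $d^m$ tuples of an $m^{\mathrm{th}}$-order marginal already achieves the maximum allowed by Theorem~\ref{isoper-th}, and no other selection of $q = d^m$ tuples can do better. I don't anticipate a real obstacle here: the only thing to be careful about is that the generalized binomial in Theorem~\ref{isoper-th} reduces to the classical one at integer arguments (so that counting subcubes and evaluating the bound give the same number), which follows immediately from $\Gamma(t+1) = t!$ for nonnegative integer $t$.
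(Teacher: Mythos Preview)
Your proposal is correct and follows essentially the same approach as the paper: specialize Theorem~\ref{isoper-th} at $q=d^m$ to get the bound $d^{m-k}\binom{m}{k}$, then count the $k^{\mathrm{th}}$-order marginals inside a single $m^{\mathrm{th}}$-order marginal and observe they match. Your explicit remark that the gamma-function binomial agrees with the ordinary one at integer $m$ is a small nicety the paper leaves implicit.
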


\begin{proof}
When $q=d^m$, the formula of Theorem~\ref{isoper-th} becomes $f_{k,n}(q) = d^{m-k}\binom{m}{k}$. That is exactly the number of marginals of order $k$ covered by a marginal of order $m$. To observe why, note that we can choose to fix any $m-k$ of the $m$ dimensions that are not fixed in the $m$th-order marginal. We can thus choose $\binom{m}{m-k}$ sets of dimensions to fix, and this value is the same as $\binom{m}{k}$. We can fix the $m-k$ dimensions in any of $d^{m-k}$ ways, thus enabling us to cover $d^{m-k}\binom{m}{k}$ marginals of order $k$.
\end{proof}

\subsection{The Lower Bound on Replication Rate}
\label{general-lower-subsect}
An important consequence of Theorem~\ref{isoper-th} is that we can use our observations about handles and their covers to get a lower bound on replication rate.

\begin{corollary}
\label{general-lower-corr}
If we compute all $k^{\mathrm{th}}$-order marg\-i\-nals using reducers of size $q$, then the replication rate must be at least $r\ge \binom{n}{k}/\binom{\log_dq}{k}$.
\end{corollary}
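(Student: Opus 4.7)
The plan is to combine Theorem~\ref{isoper-th} (upper bound on subcubes covered by a set of $q$ points) with a counting argument that ties the replication rate to the total size of all reducers. The key identity I will use is that if the problem has $d^n$ inputs, then by the definition of replication rate, $\sum_{i} s_i = r \cdot d^n$, where $s_i$ is the number of inputs at the $i$th reducer.

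First I would set up the counting of marginals. Every one of the $d^{n-k}\binom{n}{k}$ $k^{\mathrm{th}}$-order marginals must be entirely covered by at least one reducer (this is the second requirement of a mapping schema). Hence, if $N_i$ denotes the number of $k^{\mathrm{th}}$-order marginals fully contained at reducer $i$, then $\sum_i N_i \geq d^{n-k}\binom{n}{k}$.

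Next I would bound each $N_i$ using Theorem~\ref{isoper-th}. Since reducer $i$ holds $s_i \leq q$ inputs,
$$N_i \le f_{k,n}(s_i) \le \frac{s_i}{d^k}\binom{\log_d s_i}{k} \le \frac{s_i}{d^k}\binom{\log_d q}{k},$$
where the last inequality uses that $\binom{\log_d x}{k}$ is nondecreasing in $x$ for $x\ge d^k$ (this follows from the fact that each factor in the gamma-function expression of $\binom{\log_d x}{k}$ is nonnegative and nondecreasing once $\log_d x \ge k$; reducers smaller than $d^k$ cover zero marginals anyway, so they contribute trivially). Summing over all reducers,
$$d^{n-k}\binom{n}{k} \le \sum_i N_i \le \frac{\binom{\log_d q}{k}}{d^k}\sum_i s_i = \frac{\binom{\log_d q}{k}}{d^k}\cdot r\cdot d^n.$$
Rearranging yields $r \ge \binom{n}{k}/\binom{\log_d q}{k}$, which is the claimed bound.

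The main obstacle, and the only slightly delicate step, is justifying the monotonicity of $\binom{\log_d x}{k}$ in the regime $x \le d^k$, where the generalized binomial coefficient can dip below zero or behave nonmonotonically. I would sidestep this by observing that a reducer with fewer than $d^k$ inputs cannot contain any $k^{\mathrm{th}}$-order marginal at all (each such marginal requires exactly $d^k$ inputs), so we may restrict the sum over reducers to those with $s_i \ge d^k$, in which regime the monotonicity is straightforward from Lemma~\ref{gamma-lemma}. Everything else is bookkeeping.
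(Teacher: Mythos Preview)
Your proposal is correct and follows essentially the same approach as the paper's own proof: bound the number of $k^{\mathrm{th}}$-order marginals each reducer can cover via Theorem~\ref{isoper-th}, sum over reducers, replace $\log_d s_i$ by $\log_d q$ using monotonicity, and identify $\sum_i s_i / d^n$ with the replication rate. You are in fact more careful than the paper about justifying the monotonicity step and handling the degenerate case $s_i < d^k$, both of which the paper simply asserts.
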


\begin{proof}
Suppose we use some collection of reducers, where the $i$th reducer receives $q_i$ inputs. There are $d^{n-k}\binom{n}{k}$ marg\-i\-nals that must be computed. By Theorem~\ref{isoper-th}, we know that a reducer with $q_i$ inputs can compute no more than $\frac{q_i}{d^k}\binom{\log_dq_i}{k}$ marginals of order $k$, so
\begin{equation}
\label{general-lower-eq}
d^{n-k}\binom{n}{k} \le \sum_i \frac{q_i}{d^k}\binom{\log_dq_i}{k}
\end{equation}
If we replace the occurrences of $q_i$ in the expression $\log_dq_i$ by $q$ (but leave them as $q_i$ elsewhere), we know the right side of Equation~\ref{general-lower-eq} is only increased. Thus, Equation~\ref{general-lower-eq} implies:
$$d^{n-k}\binom{n}{k} \le \frac{\binom{\log_dq}{k}}{d^k}\sum_iq_i$$
We can further rewrite as:
$$\frac{\sum_iq_i}{d^n} \ge \frac{\binom{n}{k}}{\binom{\log_dq}{k}}$$
The left side is in fact the replication rate, since it is the sum of the number of inputs received by all the reducers divided by the number of inputs. That observation proves the corollary.
\end{proof}

In the case $q=d^m$, Corollary~\ref{general-lower-corr} becomes $r\ge \binom{n}{k}/\binom{m}{k}$. In general, Corollary~\ref{general-lower-corr} says that the replication rate grows rather slowly with $q$. Multiplying $q$ by $d$ (or equivalently, adding 1 to $m$) has the effect of multiplying $r$ by a factor $\binom{m+1}{k}/\binom{m}{k} = (m+1)/(m+1-k)$, which approaches 1 as $m$ gets large.

\section{Dimensions With Different Sizes}
\label{different-sizes-sect}
Let us now take up the case of nonuniform extents for the dimensions. Suppose that the $i$th dimension has $d_i$ different values. Our first observation is that whether you focus on the lower bound on replication rate of Corollary~\ref{general-lower-corr} or the upper bound of Corollary~\ref{general-upper-corr}, the replication rate is a slowly growing function of the reducer size. Thus, if the $d_i$'s are not wildly different, we can take $d$ to be $\max_id_i$. If we select handles based on that assumption, many of the reducers will get fewer than $d^m$ inputs. But the replication rate will not be too different from what it would have been had, say, all reducers been able to take the average number of inputs, rather than the maximum.

\subsection{The General Optimization Problem}
\label{varying-subsect}
We can reformulate the problem of covering sets of dimensions that represent marginals by larger sets that represent handles as a problem with weights. Let the {\em weight} of the $i$th dimension be $w_i = \log d_i$. If $q$ is the reducer size, then we can choose a handle to correspond to a marginal that aggregates over any set of dimensions, say $D_{i_1},D_{i_2},\dots,D_{i_m}$, as long as
\begin{equation}
\label{varying-eq}
\sum_{j=1}^m w_{i_j} \le \log q
\end{equation}

Selecting a smallest set of handles that cover all marginals of size $k$ and satisfy Equation~\ref{varying-eq} is surely an intractable problem. However, there are many heuristics that could be used. An obvious choice is a greedy algorithm. We select handles in turn, at each step selecting the handle that covers the most previously uncovered marginals.

\subsection{Generalizing Fixed-Weight Methods}
\label{generalize-d-subsect}
Each of the methods we have proposed for selecting handles assuming a fixed $d$ can be generalized to allow dimensions to vary. The key idea is that each method involves dividing the dimensions into several groups. We can choose to assign dimensions to groups according to their weights, so all the weights within each group are similar. We can then use the maximum weight within a group as the value of $d$ for that group. If done correctly, that method lets us use larger handles to cover the group(s) with the smallest weights, although we still have some unused reducer capacity typically.

We shall consider one algorithm: the method described in Section~\ref{2-3-subsect} for covering second-order marginals by third-order handles. Recall this algorithm divides $3n$ dimensions into three groups of $n$ dimensions each. We can take the first group to have the smallest $n$ weights, the third group to have the largest weights, and the second group to have the weights in the middle. We then take the weight of a group to be the maximum of the weights of its members. We choose $q$ to be 2 raised to the power that is the sum of the weights of the groups. Then just as in Section~\ref{2-3-subsect} we can cover all marginals that include one dimension from two different groups by selecting $n^2$ particular handles, each of which has a member from each group.

We complete the construction by recursively covering the pairs from a single group. The new element is that the way we handle a single group depends on its weight in relation to $\log q$. The effective value of $m$ (the order of the marginals used as handles) may not be 3; it could be any number. Therefore, we may have to use another algorithm for the individual groups. We hope that an example will make the idea clear.

\begin{example}
\label{varying-ex}
Suppose we have 12 dimensions, four of which have extent up to 8 (weight 3), four of which have extent between 9 and 16 (weight 4), and four of which have extent between 17 and 64 (weight 6). We thus divide the dimensions into groups of size 4, with weights 3, 4, and 6, respectively. The appropriate reducer size is then $q=2^{3+4+6} = 2^{13} = 8192$. We choose 16 handles of size three to cover the pairs of dimensions that are not from the same group.
Now, consider the group of four dimensions with extent 8 (weight 3). With reducers of size 8192 we can accommodate marginals of order 4; in fact we need only half that reducer size to do so. Thus, a single handle consisting of all four dimensions in the group suffices.

Next, consider the group with extent 16 and weight 4. Here we can only accommodate a third-order marginal at a reducer of size 8192, so we have to use three handles of size three to cover any two of the four dimensions in this group. And for the last group, with extent 64 and weight 6, we can only accommodate a second-order marginal at a reducer, and therefore we need six handles, each of which is one of the $\binom{4}{2}$ pairs of dimensions in the last group. We therefore cover all pairs of the 12 dimensions with $16+1+3+6 = 26$ handles.
\end{example}

\section{Conclusions and Open Problems}
\label{conclusions-sect}
Our goal was to minimize the communication (``replication rate'') for MapReduce computations of the marginals of a data cube. We showed how strategies for assigning work to reducers so that each reducer can compute a large number of marginals of fixed order can be viewed as the problem of ``covering'' sets of a fixed size (``marginals'') by a small number of larger sets than contain them (``handles''). We have offered lower bounds and several recursive constructions for selecting a set of handles. Except in one case, Section~\ref{2-3-subsect}, there is a gap between the lower and upper bounds on how many handles we need. We believe there are many opportunities for finding better constructions of handles.

A second important contribution was the proof that our view of the problem is valid. That is, we showed that the strategy of giving each reducer the inputs necessary to compute one marginal of higher order maximized the number of marginals a reducer could compute, given a fixed bound on the number of inputs a reducer could receive. However, this result was predicated on there being the same size extent for each dimension of the data cube. While we offer some modifications to the proposed algorithms for the case where the extents differ in size, there is no proof that an approach where each reducer is assigned the inputs for a higher-order marginal will be best.

Part of the problem is that when the dimensions have different extents, the marginals require different numbers of inputs. Therefore, if we choose to assign one higher-order marginal to a reducer, and that marginal aggregates over many dimensions with small extent, this reducer can cover many marginals with a relatively small number of inputs. But if we want to compute all marginals of a fixed order, we must also compute the marginals that aggregate over dimensions with large extents. If the number of inputs a reducer can receive is fixed, then those marginals must be computed by reducers that cover relatively few marginals. Thus, an upper bound on the number of marginals that can be covered by a reducer of fixed size will be unrealistic, and not attainable by all the reducers used in a single MapReduce algorithm.


\end{document}